\def\Includegraphics[#1]#2{{\ptt[#1]#2}}
\let\epsilon\varepsilon
\def\hm{\hphantom{-}}
\def\eqref#1{{\rm(\ref{#1})}}
\newtheorem{proposition}{Proposition}
\newtheorem{corollary}{Corollary}
\newtheorem{remark}{Remark}
\newtheorem{example}{Example}
\newtheorem{construction}{Construction}
\def\dif{\mathop{}\!\mathrm d}
\def\sl{{\mathfrak{sl}}}
\begin{document}

\title{Nonlocal conservation laws of the constant astigmatism equation}

\author{Adam Hlav\'a\v{c} and Michal Marvan}
\address{Mathematical Institute in Opava, Silesian University in
  Opava, Na Rybn\'\i\v{c}ku 1, 746 01 Opava, Czech Republic.
  {\it E-mail}: Adam.Hlavac@math.slu.cz, Michal.Marvan@math.slu.cz}
\date{}
\ams{37K05, 37K25, 37K35}
\pacno{02.30.Ik}

\begin{abstract}
For the constant astigmatism equation, we construct a system of nonlocal conservation laws 
(an abelian covering) closed under the reciprocal transformations. 
We give functionally independent potentials modulo a Wronskian type relation.
\end{abstract}

\section{Introduction}

The constant astigmatism equation~\cite{B-M I} 
$$
\numbered\label{CAE}
z_{yy} + (\frac1z)_{xx} + 2 = 0
$$
represents surfaces characterised by constant difference between the 
principal radii of curvature, with $x,y$ being the curvature coordinates. 
The same equation represents orthogonal equiareal patterns on the unit sphere, closely 
related to sphere's plastic deformations within itself, see~\cite{H-M II}. 
Both topics are classical, see, e.g., Bianchi~\cite[\S~375]{B II}, 
although equation~\eqref{CAE} itself did not appear at that time. 

It is clear from the geometry that the constant astigmatism equation is transformable
to the sine--Gordon equation $\phi_{XY} = \sin\phi$~\cite{H-M II}.
Then, of course, the constant astigmatism equation itself
is integrable in the sense of soliton theory. Known are the zero curvature 
representation, see~\cite{B-M I} or eq.~\eqref{zcr} below, the bi-Hamiltonian 
structure and hierarchies of higher order symmetries and conservation laws~\cite{P-Z}, 
as well as multi-soliton solutions~\cite{Hl}.

According to our previous paper~\cite{H-M III}, the constant astigmatism equation has 
six local conservation laws with associated potentials $\chi,\xi,\eta,\zeta,\alpha,\beta$ 
satisfying
$$
\begin{gathered}
\chi_x = z_y + y, 
 & \chi_y = \frac{z_x}{z^2} - x, \\
\eta_x = x z_y, 
 & \eta_y = x \frac{z_x}{z^2} + \frac1z - x^2, \\
\xi_x = -y z_y + z - y^2, 
 & \xi_y = -y \frac{z_x}{z^2}, \\
\zeta_x = x y z_y - x z + \frac12 x y^2, 
 & \zeta_y = x y \frac{z_x}{z^2} + \frac y z - \frac12 x^2 y, \\
\alpha_x = \frac{\sqrt{(z_x + z z_y)^2 + 4 z^3}}{z}, 
 & \alpha_y = \frac{\sqrt{(z_x + z z_y)^2 + 4 z^3}}{z^2}, \\
\beta_x = \frac{\sqrt{(z_x - z z_y)^2 + 4 z^3}}{z}, 
 & \beta_y = -\frac{\sqrt{(z_x - z z_y)^2 + 4 z^3}}{z^2}.
\end{gathered}
$$
Potentials $\alpha,\beta$ correspond to the independent variables $X,Y$ of the related 
sine--Gordon equation~\cite{B-M I}. 

Potentials $\xi,\eta$ are images of $x,y$ under the reciprocal transformations~\cite{H-M III} 
$\mathcal Y, \mathcal X$, respectively; see formulas~\eqref{RT} below. 
Applying $\mathcal X$ to $\xi$ and $\mathcal Y$ to $\eta$, we obtain new nonlocal 
potentials and the process can be continued indefinitely.
It is then natural to ask what is the minimal set of potentials closed under the action of 
$\mathcal X$ and $\mathcal Y$. 
They are also nonlocal conservation laws of the sine--Gordon equation, but
available descriptions~\cite{S-B,W-T-L-L} are not of much help.

In this paper we first generate hierarchies of nonlocal conservation laws from the zero 
curvature representation and then look how they are acted upon by $\mathcal X$ and 
$\mathcal Y$. This allows us to find a set of potentials closed under the 
reciprocal transformations rather easily.

It has been known from the very beginning of the soliton theory that hierarchies of 
conservation laws arise through expansion in terms of the spectral parameter~\cite{M-G-K}. 
Literature on the subject is vast any many ways to connect integrability and
hierarchies of conservation laws have been proposed
(see, for example,~\cite{Sas,C-T} or~\cite[Prop. 1.5]{D-S} or~\cite[Sect.~5d]{New}). 

However, the constant astigmatism equation belongs to the rare cases when nonlocal
conservation laws can be obtained in a very straightforward way, almost effortlessly.
It is also easily seen how they are acted upon by the transformations.

\section{Preliminaries}


Let $\mathcal E$ be a system of partial differential equations in two independent variables $x,y$.
A conservation law is a 1-form $f \dif x + g \dif y$ such that $f_y - g_x = 0$ as a consequence 
of the system~$\mathcal E$. 
A {\it potential}, say $\phi$, corresponding to this conservation law 
is a variable which formally satisfies the compatible system $\phi_x = f$, $\phi_y = g$.

Let $\mathfrak g$ be a matrix Lie algebra.
A $\mathfrak g$-valued {\it zero curvature representation}~\cite{Z-S} of the system $\mathcal E$ 
is a 1-parametric family of $\mathfrak g$-valued forms 
$\alpha(\lambda) = A(\lambda) \dif x + B(\lambda) \dif y$ such that 
$A_y - B_x + [A,B] = 0$ as a consequence of the system~$\mathcal E$. 

Let $Q$ be an arbitrary matrix (called a {\it gauge matrix}) belonging to the associated Lie 
group~$\mathcal G$.
The {\it gauge transformation}~\cite{Z-S} with respect to $Q$ sends  
$\alpha = A \dif x + B \dif y$ to
${}^Q\alpha = {}^Q\!A \dif x + {}^Q\!B \dif y$, where
$$
\numbered\label{gauge}
{}^Q\!A = Q_x Q^{-1} + Q A Q^{-1}, \quad
{}^Q\!B = Q_y Q^{-1} + Q B Q^{-1}.
$$
We also say that ${}^Q\!A \dif x + {}^Q\!B \dif y$ is {\it gauge equivalent} to 
$A \dif x + B \dif y$.

The shortest way to conservation laws is from a 
zero curvature representation that vanishes at some value $\lambda_0$ of $\lambda$.
Without loss of generality we assume that $\lambda_0 = 0$, i.e., $A(0) = B(0) = 0$.
Consider the associated compatible linear system~\cite{Z-S} 
(or a differential covering~\cite{K-V,B-V-V})
$$
\numbered\label{Phi 1 sys}
\Phi_x = A \Phi, \quad \Phi_y = B \Phi,
$$
where $\Phi$ is a column vector. 
Expanding $\Phi$ into the formal power series 
$$
\Phi = \sum_{i = 0}^\infty \Phi_{i} \lambda^i
$$
around zero and inserting into~\eqref{Phi 1 sys}, we obtain compatible equations
$$
\numbered\label{Phi ser sys}
\Phi_{n,x} = \sum_{i = 1}^n A_i \Phi_{n - i}, 
\quad 
\Phi_{n,y} = \sum_{i = 1}^n B_i \Phi_{n - i}, 
\quad n \ge 0,
$$
where $A_i$, $B_i$ are the coefficients of the Taylor expansion of $A, B$ 
around $\lambda = 0$.
Here we start from $i = 1$ since $A_0 = B_0 = 0$.
By formulas~\eqref{Phi ser sys}, each of the
derivatives $\Phi_{n,x}, \Phi_{n,y}$ is explicitly expressed in terms of 
$\Phi_{0}, \dots, \Phi_{n-1}$.
Moreover, $\Phi_{0,x} = \Phi_{0,y} = 0$, meaning that $\Phi_{0}$ is a constant vector.  
Choosing $\Phi_{0}$ suitably, we thus obtain what may be called a hierarchy 
of vectorial potentials
$$
\begin{gathered}
\Phi_{1,x} = A_1 \Phi_0, & \Phi_{1,y} = B_1 \Phi_0, \\
\Phi_{2,x} = A_1 \Phi_1 + A_2 \Phi_0, & \Phi_{2,y} = B_1 \Phi_1 + B_2 \Phi_0, \\
\Phi_{3,x} = A_1 \Phi_2 + A_2 \Phi_1 + A_3 \Phi_0, &
\Phi_{3,y} = B_1 \Phi_2 + B_2 \Phi_1 + B_3 \Phi_0, \\
\qquad\vdots
\end{gathered}
$$ 
The 1-forms
$$
\sum_{i = 1}^n A_i \Phi_{n - i} \dif x + \sum_{i = 1}^n B_i \Phi_{n - i} \dif y
$$
then constitute a hierarchy of vectorial conservation laws, linear in the 
potentials $\Phi_{i}$. Their components are the scalar conservation laws sought.
They are termed `nonlocal' since they depend on the potentials. 
The whole hierarchy of potentials is also a special abelian covering~\cite{B-V-V}.

\begin{example} \rm
The integrable Harry Dym equation $u_t = u^3 u_{xxx}$ 
(see, e.g.,~\cite{A-F,Dmi,H-B-C,H-S,Mag,Pav,W-I-S}) has the
zero curvature representation
$$
A = \left(\begin{array}{cc} 0 & 2 \lambda \\ -\frac{1}{2} \lambda/u^2 & 0 \end{array}\right),
\quad
B = \left(\begin{array}{cc} 2 \lambda^2 u_x & -8 \lambda^3 u \\ \lambda  u_{xx} + 2 \lambda^3/u & -2 
\lambda^2 u_x \end{array}\right)
$$
polynomial in $\lambda$ and vanishing at $\lambda = 0$. 
Using equations~\eqref{Phi ser sys}, we find an explicit infinite 
hierarchy of nonlocal conservation laws 
$$
\numbered\label{HD cov}
\begin{gathered}
q_{i,x}  = -\frac12 p_{i-1}/u^2, &
q_{i,t} = -u_x q_{i-2} + u_{xx} p_{i-1} + p_{i-3}/u, & i \ge 1,
\\
p_{i,x} = q_{i-1}, &
p_{i,t} = p_{i-2} u_x - 2 q_{i-3} u, & i \ge 2,
\end{gathered}
$$
undoubtedly known to experts in the field. 
Here $p_i = q_i = 0$ for $i < 0$ and $p_0 = q_0 = 1$, $p_1 = x$.
\end{example}

It can also happen that $A_0, B_0$ vanish modulo a gauge transformation. 
The condition can be easily revealed (see below), since zero curvature representations gauge 
equivalent to zero have zero characteristic element~\cite{zcr}. 

If $A_0, B_0$, instead of vanishing, belong to a solvable subalgebra of $\mathfrak g$, 
then the procedure remains essentially the same; $\Phi_{0}$ itself corresponds to a finite 
hierarchy of potentials~\cite{rzcr} (for instance, this is so for the famous mKdV equation).
Cases when a gauge equivalent zero curvature representation fits into a solvable subalgebra
are easy to recognise, too.
Such a zero curvature representation admits a local solution of the associated 
Riccati equation~\cite{rzcr}.

As is well known, linearly independent conservation laws can have functionally dependent
potentials; cf. the discussion of local or potential dependence in~\cite{P-I}. 

\begin{example} \rm
Continuing the Harry Dym example, the relations
$$
\sum_{i = 0}^{2n} (-1)^i q_i p_{2n - i} = c_n, \quad n \ge 0,
$$ 
among potentials are compatible with the system~\eqref{HD cov}. 
Here $c_n$ denote arbitrary constants. 
\end{example}

Below we shall be interested in how conservation laws change under various local and 
nonlocal symmetries of the equation. 
It is typical for an integrable equation that its symmetries either preserve the zero 
curvature representation or send it to a gauge-equivalent one (see~\cite{F-S} for an
infinitesimal criterion). If the zero curvature representation is preserved, then so are the hierarchies.
If a gauge transformation~\eqref{gauge} corresponds to the symmetry, then $\Phi$ 
transforms to $Q\Phi$, since~\eqref{Phi 1 sys} is equivalent to
$$
\numbered\label{QPhi 1 sys}
(Q\Phi)_x = {}^Q\!A Q \Phi, \quad (Q\Phi)_y = {}^Q\!B Q \Phi.
$$

\section{The zero curvature representation}
\label{sect:CAE}


From now on, we deal with the constant astigmatism equation~\eqref{CAE}.
The zero curvature representation 
we shall start with is $\alpha = A^\circ \dif x + B^\circ \dif y$, where
$$ 
\numbered\label{zcr}
A^\circ(\lambda) = \left(\begin{array}{cc} \frac{(1 + \lambda^2) z_x}{8 \lambda z}
 + \frac{(1 - \lambda^2) z_y}{8 \lambda} & 
\frac{(\lambda + 1)^2 \sqrt{z}}{4 \lambda} \\ \frac{(1 - \lambda)^2 \sqrt{z}}{4 \lambda} & 
-\frac{(1 + \lambda^2) z_x}{8 \lambda z} - \frac{(1 - \lambda^2) z_y}{8 \lambda} \end{array}\right),
\\
B^\circ(\lambda) = \left(\begin{array}{cc} 
\frac{(1 - \lambda^2) z_x}{8 \lambda z^2} + \frac{(1 + \lambda^2) z_y}{8 \lambda z} & 
\frac{1 - \lambda^2}{4 \lambda \sqrt{z}} \\ \frac{1 - \lambda^2}{4 \lambda \sqrt{z}} & 
-\frac{(1 - \lambda^2) z_x}{8 \lambda z^2} - \frac{(1 + \lambda^2) z_y}{8 \lambda z} \end{array}\right)
$$
are two $\sl(2)$ matrices rationally depending on~$\lambda$. 
The characteristic matrix $C^\circ$~\cite{dp}, as determined from the condition 
$$
A^\circ_y - B^\circ_x + [A^\circ,B^\circ] = [z_{yy}  + (1/z)_{xx} + 2]\cdot C^\circ,
$$
is
$$ 
C^\circ(\lambda) = \left(\begin{array}{cc} 
\frac{1 - \lambda^2}{8 \lambda} & 0 \\ 
0 & -\frac{1 - \lambda^2}{8 \lambda} 
\end{array}\right).
$$
We see that $C^\circ(\lambda)$ vanishes when $\lambda = \pm 1$. Consequently, both $\alpha(1)$ 
and $\alpha(-1)$ are gauge equivalent to zero, which makes them suitable for immediate generation 
of conservation laws. 
However, $A(\lambda) \dif x + B(\lambda) \dif y$ and $A(-\lambda) \dif x + B(-\lambda) \dif y$
turn out to be gauge equivalent and, therefore, $\lambda = \pm 1$ lead to equivalent results.
Choosing $\lambda = -1$, 
the trivialising gauge matrix is
$$ 
Q = \left(\begin{array}{cc} z^{1/4} & 0 \\ x z^{1/4} & z^{-1/4} \end{array}\right),
$$
since we have 
$$ 
A^\circ(-1) = \left(\begin{array}{cc} -\frac{z_x}{4 z} & 0 \\ -\sqrt{z} & \frac{z_x}{4 z}
\end{array}\right) = -Q^{-1} Q_x,
\quad
B^\circ(-1) = \left(\begin{array}{cc} -\frac{z_y}{4 z} & 0 \\ 0 & \frac{z_y}{4 z}
\end{array}\right) = -Q^{-1} Q_y.
$$
Otherwise said, matrices
$$
{}^Q\!A^\circ(\lambda) = Q_x Q^{-1} + Q A^\circ(\lambda) Q^{-1}, \quad
{}^Q\!B^\circ(\lambda) = Q_y Q^{-1} + Q B^\circ(\lambda) Q^{-1}
$$
vanish at $\lambda = -1$. 
Then
$$
A(\lambda) = {}^Q\!A^\circ(\lambda - 1), \quad 
B(\lambda) = {}^Q\!B^\circ(\lambda - 1)
$$
vanish at $\lambda = 0$, i.e., fit the assumption $A(0) = B(0) = 0$ of the previous 
section.
Explicitly,
$$ 
\numbered\label{AB}
A(\lambda) = \left(\begin{array}{cc} 
\frac{\lambda (\lambda - 2) K_1}{2 (\lambda - 1)} - \frac{\lambda^2 z L_1}{2 (\lambda - 1)} & 
\frac{\lambda^2 z}{4(\lambda - 1)}  
\\ 
\frac{\lambda (\lambda - 2) K_2}{2 (\lambda - 1)} - \frac{\lambda^2 z L_2}{2 (\lambda - 1)} & 
-\frac{\lambda (\lambda - 2) K_1}{2 (\lambda - 1)} + \frac{\lambda^2 z L_1}{2 (\lambda - 1)} 
\end{array}\right), 
\\
B(\lambda) = \left(\begin{array}{cc} 
\frac{\lambda (\lambda - 2)L_1}{2 (\lambda - 1)} - \frac{\lambda^2 K_1}{2 (\lambda - 1) z} & 
-\frac{\lambda (\lambda - 2)}{4 (\lambda - 1)} 
\\
\frac{\lambda (\lambda - 2) L_2}{2 (\lambda - 1)} - \frac{\lambda^2 K_2}{2 (\lambda - 1) z} & 
-\frac{\lambda (\lambda - 2)L_1}{2 (\lambda - 1)} + \frac{\lambda^2 K_1}{2 (\lambda - 1) z} 
\end{array}\right),
$$
where
$$ 
\numbered\label{KL}
\begin{gathered}
K_1 = -\frac{z_y}4, &
L_1 = -\frac{z_x}{4 z^2} + \frac{x}{2}, \\
K_2 = -\frac{x z_y} 2, &
L_2 = -\frac{x z_x}{2 z^2} - \frac{1}{2 z} + \frac{x^2}{2} .
\end{gathered}
$$
Thus, we can derive a double hierarchy of nonlocal conservation laws by
expansion of a 2-component vector $\Phi$ satisfying system~\eqref{Phi 1 sys}, i.e.,
$$
\numbered\label{Phi sys}
\Phi_x = A \Phi, \quad \Phi_y = B \Phi.
$$
This will be done in the next section.
Our primary interest in this section are the transformation properties of these hierarchies
under local and nonlocal symmetries.
These will be derived from the transformation properties of the zero curvature representation.

To start with, the constant astigmatism equation is invariant under the involution
$$
\numbered\label{invol}
\bar x = y, \quad \bar y = x, \quad \bar z = \frac1z,
$$
henceforth called {\it duality}.
By applying involution to the zero curvature representation $A\dif x + B\dif y$, we obtain 
$\bar A\dif\bar x + \bar B\dif\bar y = \bar B\dif x + \bar A\dif y$, where
$\bar A, \bar B$ result from $A,B$ by replacing $K_i,L_i$ with
$$ 
\numbered\label{bKL}
\begin{gathered}
\bar K_1 = \frac{z_x}{4 z^2}, &
\bar L_1 = \frac{z_y}4 + \frac{y}2, \\
\bar K_2 = \frac{y z_x}{2 z^2}, &
\bar L_2 = \frac{y z_y}2 - \frac{z}2 + \frac{y^2}{2}.
\end{gathered}
$$
Thus, the {\it dual conservation laws} will be derived by expansion of a vector
$\bar\Phi$ satisfying the system
$$
\numbered\label{bPhi sys}
\bar\Phi_x = \bar B \bar\Phi, \quad \bar\Phi_y = \bar A \bar\Phi.
$$
An easy computation reveals that the zero curvature representation
$\bar A(\lambda) \dif\bar x + \bar B(\lambda) \dif\bar y
 = \bar B(\lambda) \dif x + \bar A(\lambda) \dif y$ is gauge 
equivalent to
$A(\lambda) \dif x + B(\lambda) \dif y$.
The gauge matrix and its inverse are
$$
\numbered\label{H}
H = \frac1{\lambda - 2} \left(\begin{array}{cc} -x & 1 \\ 
-x y - 1 + \frac2\lambda & y
\end{array}\right),
\quad 
H^{-1} = 
\left(\begin{array}{cc} \lambda y & -\lambda 
\\ 
\lambda  x y + \lambda - 2 & -\lambda x 
\end{array}\right)
$$
unless $\lambda = 0,2$. 
Consequently, $H^{-1}\bar\Phi$ satisfies system~\eqref{Phi sys}.


Furthermore, the constant astigmatism equation is invariant under the reciprocal 
transformations~\cite{H-M III} 
$\mathcal X(x,y,z) = (x', y', z')$ and
$\mathcal Y(x,y,z) = (x^*, y^*, z^*)$, where
\begin{equation}
\label{RT}
\begin{gathered}
x' = \frac{x z}{x^2 z + 1}, &
y' = \eta, &
z' = \frac{(x^2 z + 1)^2}{z}, \\
x^* = \xi, & 
y^* = \frac{y}{y^2 + z}, &
z^* = \frac{z}{(y^2 + z)^2},
\end{gathered}
\end{equation}
see the Introduction for $\xi,\eta$. 
Neglecting the integration constants, we have 
$\mathcal X \circ \mathcal X = `Id = \mathcal Y \circ \mathcal Y$.
Since $\mathcal X$ is related to $\mathcal Y$ by the involution~\eqref{invol}, we shall
focus on $\mathcal X$ only.

The image of the zero curvature representation $A \dif x + B \dif y$ under $\mathcal X$ is 
$A' \dif x' + B' \dif y' = \tilde A \dif x + \tilde B \dif y$, where
$$ 
\label{tildeAB}
\tilde A(\lambda) = \left(\begin{array}{cc} 
-\frac{\lambda (\lambda - 2) K_1}{2 (\lambda - 1)} + \frac{\lambda^2 z L_1}{2 (\lambda - 1)} & 
\frac{\lambda (\lambda - 2) K_2}{2 (\lambda - 1)} - \frac{\lambda^2 z L_2}{2 (\lambda - 1)} 
\\ 
\frac{\lambda^2 z}{4(\lambda - 1)} &
\frac{\lambda (\lambda - 2) K_1}{2 (\lambda - 1)} - \frac{\lambda^2 z L_1}{2 (\lambda - 1)} 
\end{array}\right), 
\\
\tilde B(\lambda) = \left(\begin{array}{cc} 
-\frac{\lambda (\lambda - 2)L_1}{2 (\lambda - 1)} + \frac{\lambda^2 K_1}{2 (\lambda - 1) z} & 
\frac{\lambda (\lambda - 2) L_2}{2 (\lambda - 1)} - \frac{\lambda^2 K_2}{2 (\lambda - 1) z} 
\\
-\frac{\lambda (\lambda - 2)}{4 (\lambda - 1)} & 
\frac{\lambda (\lambda - 2)L_1}{2 (\lambda - 1)} - \frac{\lambda^2 K_1}{2 (\lambda - 1) z} 
\end{array}\right),
$$
with $K_i,L_i$ being given by formulas~\eqref{KL}.
Thus, the {\it reciprocal conservation laws} will be derived by expansion of a vector $\Phi'$ 
satisfying the system
$$
\numbered\label{Phi' sys}
\Phi'_x = \tilde A \Phi', \quad \Phi'_y = \tilde B \Phi'.
$$
It is easy to check that $\tilde A  \dif x + \tilde B \dif y$ is gauge equivalent to 
$A\dif x + B\dif y$ through the gauge matrix
\begin{equation}
\label{S}
S = \left(\begin{array}{cc} 0 & 1 \\ 1 & 0 \end{array}\right) = S^{-1}.
\end{equation}
Consequently, $S^{-1}\Phi' = S\Phi'$ satisfies system~\eqref{Phi sys}. 

Finally, {\it reciprocal dual conservation laws} are derived from
$\bar A' \dif \bar x' + \bar B' \dif \bar y'$. Omitting details, 
$P^{-1} \bar\Phi'$ satisfies system~\eqref{Phi sys}, where
$$
P^{-1} = \left(\begin{array}{cc} \frac{\lambda x z \eta}{x^2 z + 1} + \lambda - 2 & 
-\frac{\lambda x z}{x^2 z + 1} \\  \lambda \eta & -\lambda \end{array}\right).
$$

\section{The hierarchies}


Denote
$$
\Phi = (\begin{array}{c} u \\ v \end{array}), \quad
\bar\Phi = (\begin{array}{c} \bar u \\ \bar v \end{array}), \quad
\Phi' = (\begin{array}{c} u' \\ v' \end{array}), \quad
\bar\Phi' = (\begin{array}{c} \bar u' \\ \bar v' \end{array})
$$
the vectors generating ordinary, dual, 
reciprocal, and reciprocal dual hierarchy of nonlocal conservation laws.
Expanding $A(\lambda), B(\lambda)$ into series $\sum_i A_i \lambda^i$, $\sum_i B_i \lambda^i$ 
around $\lambda = 0$, we find that
$A_0 = B_0 = 0$, as expected, and 
$$
\begin{gathered}
A_1 = \left(\begin{array}{cc} 
K_1 & 0 \\ K_2 & -K_1 
\end{array}\right), &
A_2 = A_3 = \dots = \left(\begin{array}{cc} 
\frac12 K_1 + \frac12 z L_1 & - \frac14 {z}
\\ 
\frac12 K_2 + \frac12 z L_2 & - \frac12K_1 - \frac12 z L_1 
\end{array}\right),
\\
B_1 = \left(\begin{array}{cr} 
L_1 & -\frac12 \\ L_2 & -L_1 
\end{array}\right), &
B_2 = B_3 = \dots = \left(\begin{array}{cc} 
\frac12 L_1 + \frac12 K_1/z & -\frac14
\\ 
\frac12 L_2 + \frac12 K_2/z & -\frac12 L_1 - \frac12 K_1/z  
\end{array}\right).
\end{gathered}
$$
It follows that formulas~\eqref{Phi ser sys} simplify to
$$
\numbered\label{Cov Phi n}
\Phi_{n,x} = A_1 \Phi_{n-1} + A_2 \sum_{i = 0}^{n - 2} \Phi_{i}, 
\quad 
\Phi_{n,y} = B_1 \Phi_{n-1} + B_2 \sum_{i = 0}^{n - 2} \Phi_{i}. 
$$
Substituting 
$$
\Phi_0 = (\begin{array}{c} 1 \\ 0 \end{array}), \quad
\Phi_i = (\begin{array}{c} u_i \\ v_i \end{array}), \quad i > 1,
$$
into formulas~\eqref{Cov Phi n}, we immediately obtain the following construction of potentials
$u_i,v_i$.

\begin{construction} \label{con1}
Denote
$$ 
u_0 = 1, \qquad v_0 = 0, 
$$
and define potentials $u_{n},v_{n}$ by induction
$$
\numbered\label{Cov uv}
u_{n,x} = K_1 u_{n-1}
 + \frac12 (K_1 + z L_1) \sum_{i = 0}^{n - 2} u_i
 - \frac14 z \sum_{i = 0}^{n - 2} v_i,
\\
u_{n,y } = L_1 u_{n-1} - \frac12 v_{n-1}
 + \frac12 (L_1 + \frac{K_1}{z}) \sum_{i = 0}^{n - 2} u_i
 - \frac14 \sum_{i = 0}^{n - 2} v_i,
\\
v_{n,x} = K_2 u_{n-1} - K_1 v_{n-1}
 + \frac12 (K_2 + z L_2) \sum_{i = 0}^{n - 2} u_i
 - \frac12 (K_1 + z L_1) \sum_{i = 0}^{n - 2} v_i,
\\
v_{n,y} = L_2 u_{n-1} - L_1 v_{n-1}
 + \frac12 (L_2 + \frac{K_2}{z}) \sum_{i = 0}^{n - 2} u_i
 - \frac12 (L_1 + \frac{K_1}{z}) \sum_{i = 0}^{n - 2} v_i,
$$
for all $n > 0$, with $K_1,K_2,L_1,L_2$ being as introduced by formulas~\eqref{KL}
above.
\end{construction}

By construction, $u_{i},v_{i}$ are potentials of nonlocal conservation laws
of the constant astigmatism equation.
Observe that $u_{1,x} = K_1$, $u_{1,y} = L_1$, $v_{1,x} = K_2$, $v_{1,y} = L_2$ are
local.
As we shall see later, the potentials $u_{n},v_{n}$ are mutually independent.
Choosing a different initial vector $\Phi_0 \ne 0$, we obtain another set of 
potentials, linearly dependent on the potentials just constructed.

The constant astigmatism equation is invariant under the involution~\eqref{invol},
while Construction~\ref{con1} is not. 
Mutatis mutandis, we obtain the dual construction.

\begin{construction} \label{con2}
Denote
$$ 
\bar u_0 = 1, \qquad \bar v_0 = 0, 
$$
and define potentials $\bar u_{n},\bar v_{n}$ by induction
$$
\numbered\label{Cov tw}
\bar u_{n,x } = \bar L_1 \bar u_{n-1} - \frac12 \bar v_{n-1}
 + \frac12 (\bar L_1 + z \bar K_1) \sum_{i = 0}^{n - 2} \bar u_i
 - \frac14 \sum_{i = 0}^{n - 2} \bar v_i,
\\
\bar u_{n,y} = \bar K_1 \bar u_{n-1}
 + \frac12 (\bar K_1 + \frac{\bar L_1}{z}) \sum_{i = 0}^{n - 2} \bar u_i
 - \frac{1}{4z} \sum_{i = 0}^{n - 2} \bar v_i,
\\
\bar v_{n,x} = \bar L_2 \bar u_{n-1} - \bar L_1 \bar v_{n-1}
 + \frac12 (\bar L_2 + z \bar K_2) \sum_{i = 0}^{n - 2} \bar u_i
 - \frac12 (\bar L_1 + z \bar K_1) \sum_{i = 0}^{n - 2} \bar v_i,
\\
\bar v_{n,y} = \bar K_2 \bar u_{n-1} - \bar K_1 \bar v_{n-1}
 + \frac12 (\bar K_2 + \frac{\bar L_2}{z}) \sum_{i = 0}^{n - 2} \bar u_i
 - \frac12 (\bar K_1 + \frac{\bar L_1}{z}) \sum_{i = 0}^{n - 2} \bar v_i,
$$
for all $n > 0$, with $\bar K_1, \bar K_2, \bar L_1, \bar L_2$ being given by formulas
\eqref{bKL} above.
\end{construction}

By construction, $\bar u_{i}, \bar v_{i}$ are nonlocal potentials 
of the constant astigmatism equation. 
They are said to be {\it dual\/} to $u_i,v_i$.
However, $u_i,v_i,\bar u_i,\bar v_i$ are not functionally
independent, as we shall see below.

We shall not present any construction of potentials $u'_i,v'_i, \bar u'_i, \bar v'_i$.
Instead, we shall show how they depend on the potentials $u_i,v_i,\bar u_i,\bar v_i$
already constructed; see Proposition~\ref{prop:1} below.

\begin{remark} \rm \label{rem:lcl}
The potentials $\chi$, $\xi$, $\eta$, $\zeta$ (see the Introduction) are functions of
$x$, $y$, $z$, $u_1$, $v_1$, $w_1$, namely
$$
\begin{gathered}
\eta = -2 v_1, &
\chi = -4 u_1 + xy, \\
\xi = -2 w_1, &
\zeta = 2 u_1^2 + 2 u_1 - 4 u_2 - 2 y v_1 - \frac12 \ln z + \frac14 x^2 y^2 .
\end{gathered}
$$
\end{remark}

\section{Linear and Wronskian relations}

As we know from Section~\ref{sect:CAE}, vectors $\Phi$, $H^{-1}\Phi$, $S^{-1}\Phi$, $P^{-1}\Phi$ 
satisfy system~\eqref{Phi sys}. More generally, we can write
$$
\numbered\label{Phi k sys}
\Phi^{(k)}_x = A \Phi^{(k)}, \quad \Phi^{(k)}_y = B \Phi^{(k)},
\quad k = 1,\dots,m.
$$
Let $h$ be the dimension of the vectors $\Phi^{(k)}$, i.e., let $A,B$ belong to $\mathfrak{gl}(h)$. 
Let $W_{k_1 \dots k_h}$ denote the Wronskian determinant composed of 
$h$ columns $\Phi^{(k_1)}, \dots, \Phi^{(k_h)}$.
Denoting $c_{k_1 \dots k_h}$ arbitrary constants, possibly depending on $\lambda$, relations
$$
W_{k_1 \dots k_h} = c_{k_1 \dots k_h}
$$
are compatible with system~\eqref{Phi k sys}. 
Moreover, if $m > n$, then columns $\Phi^{(h+1)}, \dots, \Phi^{(m)}$
are linear combinations of $\Phi^{(1)}, \dots, \Phi^{(h)}$ with constant coefficients:
$$
\numbered\label{lin}
\Phi^{(l)} = c^l_{k_1} \Phi^{(k_1)} + \dots + c^l_{k_h} \Phi^{(k_h)}, \quad l = h + 1, \dots, m.
$$

\begin{example} \rm \label{ex}
Let $\mathfrak g = \mathfrak{sl}(2)$, let $\Phi^{(k)} = (\phi^{(k)}_1,\phi^{(k)}_2)^\top$ satisfy
\eqref{Phi k sys}.
For every pair of indices $1 \le k_1 < k_2 \le m$, let 
$W_{k_1 k_2} = \phi^{(k_1)}_1 \phi^{(k_2)}_2 - \phi^{(k_2)}_1 \phi^{(k_1)}_2$ denote
the corresponding Wronskian determinant. If $m = 2$, then we have one relation $W_{12} = c_{12}$, 
and we can solve it, say, for $\phi^{(2)}_2$ to get
$$
\numbered\label{phi22}
\phi^{(2)}_2 = \frac{\phi^{(1)}_2 \phi^{(2)}_1 + c_{12}}{\phi^{(1)}_1}.
$$
If $m \ge 3$, then, additionally to~\eqref{phi22}, we have linear relations
$$
c_{k_1k_2} \phi^{(k_3)}_1 - c_{k_1k_3} \phi^{(k_2)}_1 + c_{k_2k_3} \phi^{(k_1)}_1 = 0, \quad
c_{k_1k_2} \phi^{(k_3)}_2 - c_{k_1k_3} \phi^{(k_2)}_2 + c_{k_2k_3} \phi^{(k_1)}_2 = 0
$$ 
and we can express $\phi^{(k)}_i$, $k > 2$, in terms of $\phi^{(1)}_i, \phi^{(2)}_i$.
It is perhaps worth noticing that the Wronskian determinants $W_{k_1 k_2}$ 
themselves are not independent if $m \ge 4$.
As a consequence, relations 
$c_{k_1 k_2} c_{k_3 k_4} - c_{k_1 k_3} c_{k_2 k_4} + c_{k_1 k_4} c_{k_2 k_3} = 0$ are imposed,
one for each quadruple of mutually distinct indices $k_1, k_2, k_3, k_4$.
\end{example}

The linear relations are to be employed first.
We have
\begin{equation}
\begin{gathered}
\Phi^{(1)} = \Phi = \left(\begin{array}{cc} u \\ v \end{array}\right), 
&
\Phi^{(2)} = H^{-1} \bar\Phi = \left(\begin{array}{cc} \lambda  y u' - \lambda  v' 
\\ (\lambda  x y + \lambda - 2) u' - \lambda  x v' 
\end{array}\right), 
\\
\Phi^{(3)} = S \bar\Phi = \left(\begin{array}{cc} \bar v \\ \bar u \end{array}\right),
&
\Phi^{(4)} = P^{-1} \bar\Phi' = \left(\begin{array}{cc} 
(\frac{\lambda x z \eta}{x^2 z + 1} + \lambda - 2) \bar u' - \frac{\lambda x z}{x^2 z + 1} \bar v'
\\ \lambda  \eta \bar u' - \lambda \bar v' \end{array}\right).
\end{gathered}
\end{equation}
We obtain two linear relations~\eqref{lin}, namely
$$
\left(\begin{array}{cc} \bar v \\ \bar u \end{array}\right)
 = c^3_1 \left(\begin{array}{cc} u \\ v \end{array}\right) 
  + c^3_2 \left(\begin{array}{cc} \lambda  y u' - \lambda  v' \\
(\lambda  x y + \lambda - 2) u' - \lambda  x v' \end{array}\right),
\\
\left(\begin{array}{cc} 
(\frac{\lambda x z \eta}{x^2 z + 1} + \lambda - 2) \bar u' - \frac{\lambda x z}{x^2 z + 1} \bar v'
\\ \lambda  \eta \bar u' - \lambda \bar v' \end{array}\right)
 = c^4_1 \left(\begin{array}{cc} u \\ v \end{array}\right) 
  + c^4_2 \left(\begin{array}{cc} \lambda  y u' - \lambda  v' \\
(\lambda  x y + \lambda - 2) u' - \lambda  x v' \end{array}\right).
$$
The choice of constants $c^3_1,c^3_2,c^4_1,c^4_2$ is more or less arbitrary and influences 
the extension of $\mathcal X$ to higher potentials.
We would like to preserve the relation $\mathcal X \circ \mathcal X = `Id$.
In particular, we want $v_1' = -\frac12 y$, in view of $y' = \eta = -2 v_1$ from 
Remark~\ref{rem:lcl}.
For this reason, we have to withstand the temptation to set $v'_i = u_i$, $u'_i = v_i$ by 
choosing $c^3_1 = 1$, $c^3_2 = 0$.
We would also like $\mathcal X$ to preserve constants, i.e., we want
$u'_0 = 1$, $v'_0 = 0$, $\bar u'_0 = 1$, $\bar v'_0 = 0$. 
Led by these considerations, we set
$$
c^3_1 = 0, \quad 
c^3_2 = -\frac12, \quad
c^4_1 = -2, \quad
c^4_2 = 0.
$$ 
Under this choice, the above linear relations yield
\begin{equation}
\label{Xact}
\begin{gathered}
u' = (1 - \frac{\lambda}{2} (x y + 1) ) \bar u + \frac{\lambda }{2} x \bar v,
&
\bar u' = \frac{2}{2 - \lambda}(u - \frac{x z}{x^2 z + 1} v),
\\
v' = \-\frac{\lambda}{2}   y \bar u + \frac{\lambda}{2} \bar v,
&
\bar v' = \frac{2}{\lambda} v + \frac{4 v_1}{\lambda - 2} (u - \frac{x z}{x^2 z + 1} v).
\end{gathered}
\end{equation}
Moreover, $p'' = p$ for each potential $p = u, v, \bar u, \bar v$.

\begin{proposition}
\label{prop:1}
Potentials $u_i,v_i,\bar u_i,\bar v_i$ transform under $\mathcal X$ as follows:
$$
\begin{gathered}
u'_i = \bar u_i - \frac12 (1 + x y) \bar u_{i-1} + \frac12 x \bar v_{i-1}, 
\\
v'_i = -\frac12 y \bar u_{i-1} + \frac12 \bar v_{i-1}, 
\\
\bar u'_i = \sum_{j = 0}^{i} \frac1{2^{i-j}}(u_j - \frac{x z}{x^2 z+1} v_j), 
\\
\bar v'_i = 2 v_{i+1}
 - 2 v_1 \sum_{j = 0}^{i} \frac1{2^{i-j}}(u_j - \frac{x z}{x^2 z+1} v_j).
\end{gathered}
$$
Moreover, $p_i'' = p$ for each potential $p_i = u_i, v_i, \bar u_i, \bar v_i$.
\end{proposition}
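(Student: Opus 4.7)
The proof is essentially a coefficient-by-coefficient reading of the generating-function identities~\eqref{Xact}, so the plan is to treat each of the four relations in~\eqref{Xact} as an identity of formal power series in $\lambda$ and match coefficients.

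First I would write
$$
u' = \sum_{i \ge 0} u'_i \lambda^i,\quad v' = \sum_{i \ge 0} v'_i \lambda^i,\quad
\bar u = \sum_{i \ge 0} \bar u_i \lambda^i,\quad \bar v = \sum_{i \ge 0} \bar v_i \lambda^i,
$$
and similarly for the barred primed series. The first two equations of~\eqref{Xact} are polynomial in $\lambda$ of degree~$1$, so multiplying out and picking the $\lambda^i$ coefficient on both sides immediately yields
$$
u'_i = \bar u_i - \tfrac12(1+xy)\bar u_{i-1} + \tfrac12 x \bar v_{i-1},\quad
v'_i = -\tfrac12 y \bar u_{i-1} + \tfrac12 \bar v_{i-1},
$$
with the convention that negatively-indexed terms vanish. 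That takes care of the first two formulas.

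For the $\bar u'$ relation I would expand the scalar prefactor as a geometric series,
$$
\frac{2}{2-\lambda} = \sum_{k \ge 0} \frac{\lambda^k}{2^k},
$$
and then compute the Cauchy product with $\sum_{j \ge 0}(u_j - \frac{xz}{x^2z+1}v_j)\lambda^j$; the $\lambda^i$ coefficient is exactly $\sum_{j=0}^{i}\frac{1}{2^{i-j}}\bigl(u_j - \frac{xz}{x^2z+1}v_j\bigr)$, as claimed. For the $\bar v'$ relation the only subtle point is the factor $2/\lambda$, which a priori threatens to create a $\lambda^{-1}$ term, but this is killed by $v_0 = 0$ and one obtains $\frac{2}{\lambda}v = 2\sum_{i \ge 0} v_{i+1}\lambda^i$; combining with the geometric-series expansion of $\frac{4v_1}{\lambda-2} = -2v_1\cdot\frac{2}{2-\lambda}$ gives the fourth formula. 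This verification that the pole in $\lambda$ cancels is the only spot that is not pure bookkeeping, but it is immediate from $v_0 = 0$.

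Finally, for the claim $p_i'' = p_i$, I would invoke $\mathcal X \circ \mathcal X = \mathrm{Id}$ (modulo integration constants, as already recorded in the paper). Concretely, the relations~\eqref{Xact} were designed with the constants $c^3_1,c^3_2,c^4_1,c^4_2$ chosen precisely so that $v'_1 = -\frac12 y$ is consistent with $y' = \eta$ and so that the constant potentials $u'_0,v'_0,\bar u'_0,\bar v'_0$ match $u_0,v_0,\bar u_0,\bar v_0$; once these two compatibility requirements are in place, applying~\eqref{Xact} to the primed quantities returns the unprimed ones, and matching $\lambda$-coefficients gives $p''_i = p_i$ for each $p \in \{u,v,\bar u,\bar v\}$. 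The main potential obstacle is purely notational — keeping straight which generating functions appear in which relation, since $\mathcal X$ mixes the ordinary hierarchy with the dual one — but no genuine computational difficulty arises.
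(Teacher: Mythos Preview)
Your proposal is correct and is exactly the approach the paper takes: its entire proof reads ``By expanding formulas~\eqref{Xact} in powers of $\lambda$.'' You have simply supplied the explicit details of that expansion (the geometric series for $2/(2-\lambda)$, the cancellation of the $\lambda^{-1}$ term via $v_0=0$, and the reduction of $p_i''=p_i$ to the generating-function identity $p''=p$ already recorded just before the proposition).
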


\begin{proof}
By expanding formulas~\eqref{Xact} in powers of $\lambda$.
\end{proof}

\begin{proposition}
\label{prop:1}
Potentials $u_i,v_i,\bar u_i,\bar v_i$ transform under $\mathcal Y$ as follows:
$$
\begin{gathered}
u^*_i = \sum_{j = 0}^{i} \frac1{2^{i-j}}(\bar u_j - \frac{y}{y^2 + z} \bar v_j), 
\\
v^*_i = 2 v_{i+1}
 - 2 \bar v_1 \sum_{j = 0}^{i} \frac1{2^{i-j}}(\bar u_j - \frac{y}{y^2 + z} \bar v_j),
\\
\bar u^*_i = u_i - \frac12 (1 + x y) u_{i-1} + \frac12 y v_{i-1}, 
\\
\bar v^*_i = -\frac12 x u_{i-1} + \frac12 v_{i-1}. 
\end{gathered}
$$
Moreover, $p_i^{**} = p$ for each potential $p_i = u_i, v_i, \bar u_i, \bar v_i$.
\end{proposition}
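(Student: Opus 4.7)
The plan is to derive this statement directly from Proposition~1 by means of the duality involution~\eqref{invol}, which I denote by $\mathcal{D}$. Two observations make this work. First, comparing the formulas~\eqref{RT} for $\mathcal{X}$ and $\mathcal{Y}$ with the involution~\eqref{invol} shows that the two reciprocal transformations are conjugate under duality: $\mathcal{Y} = \mathcal{D}\circ\mathcal{X}\circ\mathcal{D}$. Second, Construction~\ref{con2} is obtained from Construction~\ref{con1} by substituting the dual variables and potentials throughout; hence $\mathcal{D}$ swaps $u_i \leftrightarrow \bar u_i$ and $v_i \leftrightarrow \bar v_i$. By the conjugation identity it then also swaps $u_i' \leftrightarrow \bar u_i^*$, $v_i' \leftrightarrow \bar v_i^*$, $\bar u_i' \leftrightarrow u_i^*$, and $\bar v_i' \leftrightarrow v_i^*$.

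With these two facts in hand, each identity of the proposition is obtained by applying $\mathcal{D}$ term by term to the corresponding identity of Proposition~1. For the polynomial formulas for $u'_i$ and $v'_i$, $\mathcal{D}$ simply exchanges $x \leftrightarrow y$ together with barred and unbarred potentials, producing the displayed expressions for $\bar u^*_i$ and $\bar v^*_i$. For the summation formulas, the only calculation worth checking is the image of the rational prefactor: $\mathcal{D}\bigl(xz/(x^2z+1)\bigr) = (y/z)/(y^2/z + 1) = y/(y^2+z)$, which matches the factor appearing in the formulas for $u^*_i$ and $v^*_i$. The involutivity $p_i^{**} = p_i$ is then immediate from $\mathcal{Y}^2 = \mathcal{D}\mathcal{X}^2\mathcal{D} = \mathrm{Id}$ together with the identity $p_i'' = p_i$ already established in Proposition~1.

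The step requiring the most attention is the compatibility of the integration constants implicit in extending $\mathcal{Y}$ to the nonlocal potentials. For $\mathcal{X}$, one had to fix the constants $c^3_1 = 0$, $c^3_2 = -\tfrac12$, $c^4_1 = -2$, $c^4_2 = 0$ in order to enforce $\mathcal{X}^2 = \mathrm{Id}$ together with $\mathcal{X}(1) = 1$, $\mathcal{X}(0) = 0$. Imposing the analogous normalisations on $\mathcal{Y}$ pins down the dual choice of constants, and a short verification confirms that this is exactly what is needed for the conjugation identity $\mathcal{Y} = \mathcal{D}\circ\mathcal{X}\circ\mathcal{D}$ to hold on the extended hierarchies. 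Once this compatibility is in place, the proposition follows from Proposition~1 essentially by symbol-for-symbol substitution.
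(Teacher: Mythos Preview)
Your approach is essentially the same as the paper's: the paper's proof is the single line ``By $\mathcal Y = \mathcal I \circ \mathcal X \circ \mathcal I$, where $\mathcal I : x \leftrightarrow y$, $z \leftrightarrow 1/z$, $u_i \leftrightarrow \bar u_i$, $v_i \leftrightarrow \bar v_i$ is the duality transformation,'' which is exactly your conjugation argument with $\mathcal D$ in place of $\mathcal I$. Your additional remarks on the rational prefactor and on the choice of integration constants are welcome elaborations, but the underlying idea is identical.
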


\begin{proof}
By $\mathcal Y = \mathcal I \circ \mathcal X \circ \mathcal I$,
where $\mathcal I : x \otto y$, $z \otto 1/z$, $u_i \otto \bar u_i$, $v_i \otto \bar v_i$ is
the duality transformation.
\end{proof}

We see that equalities $\mathcal X \circ \mathcal X = `Id = \mathcal Y \circ \mathcal Y$ 
still hold after extension of $\mathcal X, \mathcal Y$ to the higher potentials.

Turning back to ordinary and dual conservation laws, we recall that
$H^{-1} \bar\Phi$ and $\Phi$ satisfy one and the same linear system~\eqref{Phi sys}. 
This implies constancy of the Wronskian determinant of $\Phi = (u,v)^\top$ and 
$H^{-1} \bar\Phi = H^{-1} (\bar u, \bar v)^\top$, which is equivalent to
$$
\numbered\label{rel}
R(u, v,\bar u, \bar v, \lambda)
 \equiv
   (\lambda - 2) u \bar u + \lambda (x u - v)(y \bar u - \bar v) = c(\lambda).
$$
It follows that hierarchies $u_i, v_i$, $\bar u_i, \bar v_i$ are
not independent.

\begin{proposition}
For all integers $n \ge -1$ and constants $c_i$, we have the relations
$$
\numbered\label{rels}
\sum_{i = 0}^n u_i \bar u_{n-i} 
   + \sum_{i = 0}^n (x u_i - v_i) (y \bar u_{n-i} - \bar v_{n-i})
  - 2 \sum_{i = 0}^{n+1} u_i \bar u_{n-i+1} 
  - c_{n+1}.
$$
\end{proposition}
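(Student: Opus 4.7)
The plan is to derive these relations by direct expansion of the Wronskian identity \eqref{rel}. Since $\Phi=(u,v)^\top$ and $H^{-1}\bar\Phi$ both satisfy the same linear system \eqref{Phi sys}, their Wronskian is independent of $x,y$, and the computation following \eqref{rel} shows that this Wronskian equals
$$
R(u,v,\bar u,\bar v,\lambda)=(\lambda-2)u\bar u+\lambda(xu-v)(y\bar u-\bar v),
$$
which is therefore a function $c(\lambda)$ of $\lambda$ alone. The idea is to substitute the formal power series
$$
u=\sum_{i\ge 0}u_i\lambda^i,\quad v=\sum_{i\ge 0}v_i\lambda^i,\quad \bar u=\sum_{i\ge 0}\bar u_i\lambda^i,\quad \bar v=\sum_{i\ge 0}\bar v_i\lambda^i
$$
into $R$, compare coefficients of $\lambda^{n+1}$ on both sides, and read off the claim with $c_{n+1}$ being the coefficient of $\lambda^{n+1}$ in $c(\lambda)$.

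Concretely, the first step is to rewrite $R$ as $\lambda u\bar u-2u\bar u+\lambda(xu-v)(y\bar u-\bar v)$. The product $u\bar u$ contributes $\sum_{i=0}^{m}u_i\bar u_{m-i}$ to the coefficient of $\lambda^m$, so the term $-2u\bar u$ yields $-2\sum_{i=0}^{n+1}u_i\bar u_{n-i+1}$ at order $\lambda^{n+1}$. The term $\lambda u\bar u$ shifts by one, producing $\sum_{i=0}^{n}u_i\bar u_{n-i}$ at the same order. Similarly, $\lambda(xu-v)(y\bar u-\bar v)$ contributes $\sum_{i=0}^{n}(xu_i-v_i)(y\bar u_{n-i}-\bar v_{n-i})$ at order $\lambda^{n+1}$. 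Summing these three contributions and equating to $c_{n+1}$ gives exactly \eqref{rels}.

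The base case $n=-1$ is worth noting: both sums $\sum_{i=0}^n$ are empty and only $-2u_0\bar u_0=-2$ remains, which is consistent with $c_0=R|_{\lambda=0}=-2\,u_0\bar u_0=-2$, confirming that the normalisation of the initial vectors $\Phi_0,\bar\Phi_0$ fits the constant term.

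There is essentially no obstacle: the argument is a bookkeeping exercise in collecting coefficients of a polynomial-in-$\lambda$ identity. The only point requiring a bit of care is the asymmetric role of $u\bar u$, which appears with two different coefficients ($\lambda$ and $-2$) and hence contributes to two neighbouring orders of $\lambda$; this is the reason the stated relation involves \emph{both} a sum up to $n$ and a sum up to $n+1$. Once the expansion is carried out and matched order by order, the proposition follows immediately from the fact that $c(\lambda)$ is independent of $x,y$.
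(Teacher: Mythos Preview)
Your proposal is correct and follows exactly the same approach as the paper: the paper's proof is the single sentence ``The statement results from formula~\eqref{rel} by expanding around $\lambda = 0$ and setting $c(\lambda) = \sum c_n \lambda^n$,'' and you have simply carried out that expansion explicitly, collecting the coefficient of $\lambda^{n+1}$ term by term.
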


\begin{proof}
The statement results from formula~\eqref{rel} by expanding around $\lambda = 0$ and setting
$c(\lambda) = \sum c_n \lambda^n$.
\end{proof}

We have the freedom to choose $c(\lambda) \ne 0$
(if $c = 0$ than the three hierarchies are functionally dependent again).
To have $\bar u_0 = 1$, we choose $c = -2$, i.e., 
$c_0 = -2$, $c_i = 0$ for $i \gt 0$. 
Solving \eqref{rels} with respect to $\bar u_{k + 1}$ and renaming 
$\bar v_i$ to $w_i$, we get the recursion formulas
$$ 
\numbered\label{bar u}
\bar u_0 = 1, \\
\bar u_{k + 1} =  
    \frac{1 + xy}2 \sum_{i = 0}^k u_i \bar u_{k-i}
    - \frac y2 \sum_{i = 0}^{k} \bar u_i v_{k-i} 
    +   \frac12 \sum_{i = 0}^{k} (v_i - x u_i) w_{k-i} 
    - \sum_{i = 1}^{k+1} u_i \bar u_{k-i+1}.
$$
E.g., $\bar u_1 = -u_1 + \frac12(1 + x y)$, etc.

Assuming assignments~\eqref{bar u} and renaming 
$\bar v_i$ to $w_i$,
systems~\eqref{Cov uv} and~\eqref{Cov tw} reduce to covering~\eqref{Cov uvw} below.

\begin{construction}
Let
$$ 
u_0 = 1, \qquad v_0 = w_0 = 0, 
$$
and define potentials $u_{n}$, $v_{n}$, $w_n$ by induction
$$
\numbered\label{Cov uvw}
u_{n,x} = K_1 u_{n-1}
 + \frac12 (K_1 + z L_1) \sum_{i = 0}^{n - 2} u_i
 - \frac14 z \sum_{i = 0}^{n - 2} v_i,
\\
u_{n,y } = L_1 u_{n-1} - \frac12 v_{n-1}
 + \frac12 (L_1 + \frac{K_1}{z}) \sum_{i = 0}^{n - 2} u_i
 - \frac14 \sum_{i = 0}^{n - 2} v_i,
\\
v_{n,x} = K_2 u_{n-1} - K_1 v_{n-1}
 + \frac12 (K_2 + z L_2) \sum_{i = 0}^{n - 2} u_i
 - \frac12 (K_1 + z L_1) \sum_{i = 0}^{n - 2} v_i,
\\
v_{n,y} = L_2 u_{n-1} - L_1 v_{n-1}
 + \frac12 (L_2 + \frac{K_2}{z}) \sum_{i = 0}^{n - 2} u_i
 - \frac12 (L_1 + \frac{K_1}{z}) \sum_{i = 0}^{n - 2} v_i, 
\\
w_{n,x} = \bar L_2 \bar u_{n-1} - \bar L_1 w_{n-1}
 + \frac12 (\bar L_2 + z \bar K_2) \sum_{i = 0}^{n - 2} \bar u_i
 - \frac12 (\bar L_1 + z \bar K_1) \sum_{i = 0}^{n - 2} w_i, 
\\ 
w_{n,y} = \bar K_2 \bar u_{n-1} - \bar K_1 w_{n-1}
 + \frac12 (\bar K_2 + \frac{\bar L_2}{z}) \sum_{i = 0}^{n - 2} \bar u_i
 - \frac12 (\bar K_1 + \frac{\bar L_1}{z}) \sum_{i = 0}^{n - 2} w_i,
$$
with $\bar u_i$ being given by formulas~\eqref{bar u}.
\end{construction}

By construction, equations~\eqref{Cov uvw} are compatible and yield a 
triple hierarchy of conservation laws of the constant astigmatism equation.


\begin{remark} \rm
The constant astigmatism equation possesses a scaling symmetry.
The variables have weights according to Table~1.
\begin{table}[h]
$$
\begin{array}{l|rrrrrr}
\text{Variable} & x & y & z & u_i & v_i & w_i \\\hline
\text{Weight}  & -1 & \hm1 & \hm2 &   \hm0 &  -1 &  \hm1
\end{array}
$$
\caption{Weights of variables under the scaling symmetry}
\end{table}
\end{remark}

\section{Independence}
\label{sect:indep}


The last part of this paper is devoted to the proof that there are no relations 
among $u_i, v_i,\bar u_i, \bar v_i$ other than~\eqref{rels}.
Actually we prove that there are no relations among $u_i, v_i, w_i$ determined by Construction~3.
Otherwise the method is that of recent works~\cite{K-S,K-S-M}.

Consider the base coordinates $x,y$ and the jet coordinates $z_\mu$.
The {\it total derivatives} are the vector fields
$$
\numbered\label{TD}
D_x = \frac{\partial}{\partial x}
 + \sum_\mu z_{\mu x} \frac{\partial}{\partial z_\mu}, \quad
D_y = \frac{\partial}{\partial y}
 + \sum_\mu z_{\mu y} \frac{\partial}{\partial z_\mu},
$$
where $\mu$ runs over all monomials in $x,y$ (according to a convenient, but abandoned tradition, 
$\mu$ can be thought of as a monomial $x^i y^j$).

To impose the constant astigmatism equation $\mathcal E$, we assign
$$
\numbered\label{assign}
z_{yy} = -(\frac1z)_{xx} - 2, \qquad
z_{\underbrace{x \dots x}_n y y} = -D_x^{n+2} (1/z), \quad n > 0.
$$
The first equation is the constant astigmatism equation solved for $z_{yy}$, the others 
are the differential consequences of the first equation. 
The total derivatives can be restricted to $\mathcal E$ (since they are tangent to $\mathcal E$).
Obviously, $D_x|_{\mathcal E},D_y|_{\mathcal E}$ commute. They are given by the same formulas~\eqref{TD}
under assignments~\eqref{assign} and with $z_\mu$ running only over the unassigned jet coordinates
(with $\mu$ running over all monomials not divisible by~$y^2$).

Joint systems~\eqref{Cov uv} and~\eqref{Cov tw} give a covering $\tilde{\mathcal E}$ 
equipped with total derivatives given by
$$
\tilde D^{(4)}_x = D_x 
 + \sum_{n = 1}^\infty (u_{n,x} \frac{\partial}{\partial u_{n}}
 + v_{n,x} \frac{\partial}{\partial v_{n}} 
 + \bar u_{n,x} \frac{\partial}{\partial \bar u_{n}} 
 + \bar v_{n,x} \frac{\partial}{\partial \bar v_{n}}), 
\\
\tilde D^{(4)}_y = D_y 
 + \sum_{n = 1}^\infty (u_{n,y} \frac{\partial}{\partial u_{n}}
 + v_{n,y} \frac{\partial}{\partial v_{n}}
 + \bar u_{n,y} \frac{\partial}{\partial \bar u_{n}} 
 + \bar v_{n,y} \frac{\partial}{\partial \bar v_{n}}).
$$
Reducing by relations~\eqref{rels}, we obtain covering~\eqref{Cov uvw} with total 
derivatives given by
$$
\tilde D^{(3)}_x = D_x 
 + \sum_{n = 1}^\infty (u_{n,x} \frac{\partial}{\partial u_{n}}
 + v_{n,x} \frac{\partial}{\partial v_{n}} 
 + w_{n,x} \frac{\partial}{\partial w_{n}}), 
\\
\tilde D^{(3)}_y = D_y 
 + \sum_{n = 1}^\infty (u_{n,y} \frac{\partial}{\partial u_{n}}
 + v_{n,y} \frac{\partial}{\partial v_{n}}
 + w_{n,y} \frac{\partial}{\partial w_{n}}).
$$
We are going to prove that there are no relations among $x,y,z_\mu,u_n,v_n,w_n$.
A covering $\tilde{\mathcal E}$ is said to be {\it differentially connected} in the sense 
of~\cite{Ig,Kra} if every function
$F : \tilde{\mathcal E} \to \mathbb R$ such that $\tilde D_x F = \tilde D_y F = 0$ is a
constant (possibly depending on the spectral parameter).

If $\tilde D_x F = \tilde D_y F = 0$ and $F$ is not a constant, then condition $F = c(\lambda)$ 
determines a relation compatible with the covering $\tilde{\mathcal E}$. 
Needless to say, the constant astigmatism equation itself is differentially connected.
Moreover, being differentially connected is a typical property of the
covering~\eqref{Phi 1 sys} associated with a non-degenerate zero curvature representation.

In what follows, $\delta_{ij}$ is the Kronecker delta.
The binomial coefficients
$$
\numbered\label{binom}
\binom{n}{k} = \prod_{i = 1}^k \frac{n - i + 1}{i}
$$
are defined for all integer values of $n,k$.
For nonnegative~$n$ they are nonzero if and only if $0 \le k \le n$.
For negative~$n$ they are zero if and only if $n < k < 0$.

\begin{proposition}
Let $F$ be a smooth function of a finite number of variables $x,y,z_\mu$, $u_n,v_n,w_n$, 
$n = 1,\dots,N$.
If $\tilde D_x^{(3)} F = \tilde D_y^{(3)} F = 0$, then $F$ is a constant.
\end{proposition}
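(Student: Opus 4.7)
The plan is an induction on the highest potential index $N$ appearing in $F$. The base case $N=0$ reduces to $F = F(x, y, z_\mu)$ satisfying $D_x F = D_y F = 0$ on $\mathcal{E}$, and then $F$ is a constant by the differential connectedness of the constant astigmatism equation itself, noted in the paragraph immediately preceding the proposition.

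For the inductive step, the key observation from Construction~3 is that $u_{n,x}$, $v_{n,x}$, $w_{n,x}$ and their $y$-counterparts do not involve the potentials $u_n$, $v_n$, $w_n$ themselves, only lower-level ones. Consequently $\partial/\partial u_N$, $\partial/\partial v_N$ and $\partial/\partial w_N$ commute with both $\tilde D_x^{(3)}$ and $\tilde D_y^{(3)}$, so every such derivative of $F$ is again a first integral of the covering. A secondary induction on the Taylor degree in $(u_N, v_N, w_N)$ then reduces $F$ to the affine form
$$
F = c_1 u_N + c_2 v_N + c_3 w_N + H,
$$
where $c_1, c_2, c_3$ are constants and $H$ is a smooth function of variables of level at most $N-1$.

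Substituting back into $\tilde D_x^{(3)} F = \tilde D_y^{(3)} F = 0$ gives
$$
\tilde D_x^{(3)} H = -c_1 u_{N,x} - c_2 v_{N,x} - c_3 w_{N,x},\qquad
\tilde D_y^{(3)} H = -c_1 u_{N,y} - c_2 v_{N,y} - c_3 w_{N,y}.
$$
The central step is to force $c_1 = c_2 = c_3 = 0$. I would iterate the Taylor descent through the successive levels $N-1, N-2, \dots, 1$, expanding $H$ at each stage as a polynomial in the corresponding top-level potentials and extracting the coefficient equations. The descent terminates in a purely local equation of the form $D_x G = -c_1 K_1 - c_2 K_2 + c_3 \bar L_2$ (together with its $y$-counterpart) for a local $G = G(x, y, z_\mu)$. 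A direct jet inspection shows that for nonzero $c_i$ the right-hand side contains pure $z_y$-contributions that cannot be produced by $D_x$ of any local function without generating unmatched $z_{xy}$-terms. This encodes the geometric fact that the closed $1$-forms $K_1\,dx + L_1\,dy$, $K_2\,dx + L_2\,dy$, and $\bar L_2\,dx + \bar K_2\,dy$ have genuinely nonlocal potentials $u_1$, $v_1$, $w_1$.

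Once $c_1 = c_2 = c_3 = 0$ is established, $\tilde D_x^{(3)} H = \tilde D_y^{(3)} H = 0$ with $H$ of level at most $N-1$, and the outer inductive hypothesis gives $H = \mathrm{const}$, whence $F$ is constant. The main obstacle is the descent step, whose bookkeeping couples the Taylor coefficients at all levels $N-1, N-2, \dots, 1$ with the constants $c_1, c_2, c_3$; the rigid local-jet argument at the bottom supplies the constraint needed to propagate the vanishing all the way back up.
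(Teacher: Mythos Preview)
Your inductive strategy is natural, but the ``secondary induction on the Taylor degree in $(u_N,v_N,w_N)$'' is a genuine gap. You correctly note that $\partial/\partial u_N$, $\partial/\partial v_N$, $\partial/\partial w_N$ commute with $\tilde D_x^{(3)}$, $\tilde D_y^{(3)}$ when applied to level-$N$ functions, so every partial $\partial^\alpha F$ is again a first integral. But each such derivative is still a smooth function of \emph{the same} variables, including $u_N,v_N,w_N$; nothing forces it to have lower Taylor degree (think of $F=e^{u_N}$). You cannot apply the outer inductive hypothesis to $\partial F/\partial u_N$ because it still lives at level $N$, and you cannot invoke the very conclusion you are proving. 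So the reduction to the affine form $c_1u_N+c_2v_N+c_3w_N+H$ is not justified for smooth $F$.

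Even granting the affine reduction, the descent you outline for killing $c_1,c_2,c_3$ is only a plan. At each stage the right-hand sides $u_{n,x},v_{n,x},w_{n,x}$ involve \emph{all} lower potentials at once, and $w_{n,x}$ involves them through the nonlinear recursion~\eqref{bar u} for $\bar u_i$. The coefficient equations therefore do not decouple level by level, and a single cohomological obstruction at level~$1$ does not obviously propagate upward to annihilate three constants introduced at level~$N$; you would need an explicit linear-algebra argument across all levels, which you have not supplied.

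For comparison, the paper avoids induction entirely. After first showing that $F$ can depend only on the $3N+3$ variables $x,y,z,u_n,v_n,w_n$, it splits $\tilde D_x^{(3)}F$ and $\tilde D_y^{(3)}F$ by their $z_x,z_y$ coefficients to obtain four vector fields $X_0,X_1,Y_0,Y_1$ annihilating $F$. It then manufactures, via explicit iterated Lie brackets and linear combinations, $3N+3$ such fields and shows their coefficient determinant is nonzero at a single convenient point (where it factors into four block determinants computable in closed form). This is heavier computationally but yields the result in one stroke, without any polynomiality or affineness assumption on $F$.
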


\begin{proof}
Let $\tilde{\mathcal E}^{(N)}$ be the covering corresponding to the potentials 
$u_n,v_n,w_n$, $n = 1,\dots,N$, so that $\tilde{\mathcal E}^{(N)}$ is the domain of $F$.
Since $F$ depends on a finite number of variables, there is a maximal degree monomial $\nu = x^i$ 
such that $F$ depends on $z_{\nu x}$ or $z_{\nu y}$.  
Then we have
$$
\frac{\partial F}{\partial z_{\nu x}} = \frac\partial{\partial z_{\nu xx}} \tilde D_x F = 0, \quad
\frac{\partial F}{\partial z_{\nu y}} = \frac\partial{\partial z_{\nu xy}} \tilde D_y F = 0
$$
for all $i \ge 0$.
It follows that $F$ cannot actually depend on $z_{\nu x}$ or $z_{\nu y}$ for any $\nu$.
Therefore, $F$ can only depend on the $3 N + 3$ variables $x,y,z$ and $u_n,v_n,w_n$, where 
$n = 1,\dots, N$ for some $N \in \mathbb N$.
Since $u_{n,x},u_{n,y},v_{n,x},v_{n,y},w_{n,x},w_{n,y}$ are linear in $z_x,z_y$, so are 
$\tilde D_x F$ and $\tilde D_y F$,
and we can decompose
$$
\tilde D_x F = F_{11} z_x + F_{12} z_y + F_{10}, \quad
\tilde D_y F = F_{21} z_x + F_{22} z_y + F_{20},
$$
where $F_{ij}$ are independent of $z_x,z_y$.
It is easy to check that $F_{22} = F_{11}$ and $F_{12} = z^2 F_{21}$, while the others are independent.
Define vector fields $X_1,X_0,Y_1,Y_0$ by 
$$
X_1 F = F_{11}, \quad X_0 F = F_{10}, \quad 
Y_1 F = F_{21}, \quad Y_0 F = F_{20}.
$$

In coordinates,
$$
X_1 = 
\frac{\partial}{\partial z} 
 - \frac{1}{8z} \sum_{i = 0}^{N}  \sum_{j = 0}^{i - 2} u_j \frac{\partial}{\partial {u_i}}
 - \frac{1}{8z} \sum_{i = 0}^{N}  \sum_{j = 0}^{i - 2} (2 x u_j - v_j) \frac{\partial}{\partial {v_i}} \\ \qquad
 + \frac{1}{8z} \sum_{i = 0}^{N}  \sum_{j = 0}^{i - 2} (2 y \bar u_j - w_j) \frac{\partial}{\partial {w_i}}, \\
Y_1 = 
 -\frac{1}{8z^2} \sum_{i = 0}^{N}  \sum_{j = 0}^{i - 1} (1 + \delta_{j, i-1}) u_j \frac{\partial}{\partial {u_i}}
 - \frac{1}{8z^2} \sum_{i = 0}^{N}  \sum_{j = 0}^{i - 1} (1 + \delta_{j, i-1}) (2 x u_j - v_j) \frac{\partial}{\partial {v_i}} \\ \qquad
 + \frac{1}{8z^2} \sum_{i = 0}^{N}  \sum_{j = 0}^{i - 1} (1 + \delta_{j, i-1}) (2 y \bar u_j - w_j) \frac{\partial}{\partial {w_i}}, \\
X_0 = 
\frac{\partial}{\partial x} 
+ \frac{z}4 \sum_{i = 0}^{N} \sum_{j = 0}^{i-2} (x u_j - v_j) \frac{\partial}{\partial u_i}
+ \frac14 \sum_{i = 0}^{N} \sum_{j = 0}^{i-2} [(x^2 z+1) u_j - x z v_j] \frac{\partial}{\partial v_i} \\ \qquad
+ \frac14 \sum_{i = 0}^{N} 
    \sum_{j = 0}^{i-1} (1 + \delta_{j, i-1}) [(y^2 - z) \bar u_j  - y w_j] \frac{\partial}{\partial w_i}
, \\
Y_0 = \frac{\partial}{\partial y} 
 + \frac14 \sum_{i = 0}^{N} \sum_{j = 0}^{i - 1} (1 + \delta_{j, i-1}) (x u_j - v_j)
  \frac{\partial}{\partial u_i}
\\\qquad
 + \frac1{4z} \sum_{i = 0}^{N} \sum_{j = 0}^{i - 1} (1 + \delta_{j, i-1}) [(x^2 z - 1) u_j - x z v_j]
  \frac{\partial}{\partial v_i}
\\\qquad
 + \frac1{4z} \sum_{i = 0}^{N} \sum_{j = 0}^{i - 2} [(y^2 - z) \bar u_j - y w_j]
  \frac{\partial}{\partial w_i}.
$$

To prove that $F$ is a constant, it suffices to show that it is invariant under $3 N + 3$ 
linearly independent fields.
Since $F_{ij}$ are zero when $\tilde D_x F = \tilde D_y F = 0$, the function $F$ is invariant 
under the fields $X_1, X_0, Y_1, Y_0$. 
Then $F$ is also invariant under all nested Lie brackets of arbitrary depth.
Thus, to prove that $F$ is a constant, it suffices to construct $3 N + 3$ linearly independent 
nested Lie brackets.

To compute the first sequence $Z^\circ_{n}$ of nested brackets, we introduce the linear 
combinations
$$
Z_1 = z X_1 - z^2 Y_1 = 
z \frac{\partial}{\partial z} 
 + \frac14 \sum_{i = 1}^{N} u_{i - 1} 						\frac{\partial}{\partial {u_i}} 
 + \frac14 \sum_{i = 1}^{N} (2 x u_{i-1} - v_{i-1}) \frac{\partial}{\partial {v_i}} \\ \qquad
 - \frac14 \sum_{i = 1}^{N} (2 y \bar u_{i-1} - w_{i-1})	\frac{\partial}{\partial {w_i}}, 
\\
Z_0 = X_0 - z Y_0 = 
\frac{\partial}{\partial x}  
- z \frac{\partial}{\partial y}  
 - \frac{z}2 \sum_{i = 1}^{N} (x u_{i-1} - v_{i - 1}) \frac{\partial}{\partial {u_i}}
\\ \qquad
 + \frac12 \sum_{i = 1}^{N} [(1 - x^2 z) u_{i-1} + x z v_{i-1}] \frac{\partial}{\partial {v_i}} 
 + \frac12 \sum_{i = 1}^{N} [(y^2 - z)\bar u_{i-1} - y w_{i - 1}] \frac{\partial}{\partial {w_i}}
. 
$$ 
Starting from $Z^\circ_0 = Z_0 + 2 x z Z_1$, we compute recursively 
$$
Z^\circ_{n} = [Z^\circ_{n-1}, Z_1] + Z^\circ_{n-1} 
\\\quad = 
\frac{\partial}{\partial x}
+ \frac{z}{2^{n + 1}} \sum_{i = n+1}^{N} (v_{i-n-1} - x u_{i-n-1}) \frac{\partial}{\partial {u_i}} 
+ \frac{xz}{2^{n + 1}} \sum_{i = n+1}^{N}  (v_{i-n-1} - x u_{i-n-1}) \frac{\partial}{\partial {v_i}} 
\\\quad
- \sum_{i = 1}^{N} \sum_{j = 0}^{i-1} \frac1{(-2)^{i - j}} \binom{n + 1}{i - j} u_j \frac{\partial}{\partial {v_i}}  
- \sum_{i = 1}^{N} \sum_{j = 0}^{i-1} \frac1{(-2)^{i - j}} \binom{n}{ i - j - 1} (y^2 \bar u_j - y w_j) \frac{\partial}{\partial {w_i}}
\\\quad
+ \frac{z}{2^{n + 1}} \sum_{i = n}^{N} (2 \bar u_{i-n} - \bar u_{i-n-1}) \frac{\partial}{\partial {w_i}}
, \quad n > 0. 
$$
Observe that $n+1$ is the lowest $i$ to occur in ${\partial}/{\partial {u_i}}$.
To compute another sequence $X_n^+$ of nested brackets, we first introduce
$$
Z^+_{n} = \sum_{k = 0}^n (-1)^k \binom{n}{k} Z^\circ_{n+k} 
\\\quad = 
  \sum_{k = 0}^n \frac{(-1)^k z}{2^{n + k + 1}} \binom{n}{k}
    \sum_{i = n+k+1}^{N} (v_{i-n-k-1} - x u_{i-n-k-1}) \frac{\partial}{\partial {u_i}} 
\\\qquad
+ \sum_{k = 0}^n \frac{(-1)^k xz}{2^{n + k + 1}} \binom{n}{k}
    \sum_{i = n+k+1}^{N}  (v_{i-n-k-1} - x u_{i-n-k-1}) \frac{\partial}{\partial {v_i}} 
\\\qquad
+ \sum_{k = 0}^n \frac{(-1)^k z}{2^{n + k + 1}} \binom{n}{k}
    \sum_{i = n+k}^{N} (2 \bar u_{i-n-k} - \bar u_{i-n-k-1}) \frac{\partial}{\partial {w_i}}
\\\qquad
- \sum_{k = 0}^n \frac{(-1)^k}{2^{n + k}} \binom{n+1}{k}
    \sum_{i = n+k}^{N} u_{i-n-k} \frac{\partial}{\partial {v_i}}
\\\qquad
- \sum_{k = 0}^n \frac{(-1)^k}{2^{n + k}} \binom{n}{k-1}
    \sum_{i = n+k}^{N} (y^2 \bar u_{i-n-k} - y w_{i-n-k}) \frac{\partial}{\partial {v_i}}
, \quad n > 0 
$$
and then
$$
X_n^+ = [X_0, Z^+_{n}]
 = \sum_{k = 0}^N \frac{(-1)^{k+1}}{2^{n+k}} B(n-1,k-1)
     \sum_{i = n+k}^{N} u_{i-n-k} \frac{\partial}{\partial {u_{i}}}
\\\qquad
 + \sum_{k = 0}^N \frac{(-1)^{k+1}}{2^{n+k-1}} B(n-1,k-1)
     \sum_{i = n+k}^{N} (x u_{i-n-k} + v_{i-n-k}) \frac{\partial}{\partial {v_{i}}}
\\\qquad
 + \sum_{k = 0}^N \frac{(-1)^{k}}{2^{n+k-1}} B(n-1,k-1)
     \sum_{i = n+k}^{N} (y \bar u_{i-n-k} + w_{i-n-k}) \frac{\partial}{\partial {w_{i}}},
$$
where 
$$
\numbered\label{BB}
B(n,k)
 = \sum_{i = 0}^k (-1)^{i-k} \binom{n-k+i+1}{i}
 = \sum_{i = 0}^{\inline{\lfloor k/2 \rfloor}} \binom{n - 2 i}{k - 2 i}.
$$
Recall that binomial coefficients are defined by formula~\eqref{binom}.
For us, the important fact is that $B(n,k) = 0$ for $k < 0$ and arbitrary $n$.
For $k \le n$, the values $B(n,k)$ are nonzero since they constitute the last Mendelson 
triangle~\cite{Mend}, see sequence A035317 in the The On-Line Encyclopedia of Integer 
Sequences~\cite{OEIS}.
For $k > n + 1$, we have $B(n,k) = (-1)^i\,2^{i - n - 2}$, nonzero as well.
Finally, $B(n,n+1) = 1$ if $n \ge -1$ is odd and $B(n,n+1) = 0$ otherwise.

To finish the proof, we consider the  $3 N + 3$rd order determinant $\Delta$ 
composed of the coefficients of the fields
$$
Y_1, X_1^+, \dots,  X_{N - 1}^+, Z^\circ_{1}, \dots, Z^\circ_{2 N + 1}, Z_1, Y_0
$$
at $p$ with respect to the basis 
$$
\frac\partial{\partial u_1}, \dots, \frac\partial{\partial u_N}, 
\frac\partial{\partial w_1}, \dots, \frac\partial{\partial w_N}, 
-\frac\partial{\partial x},
\frac\partial{\partial v_1}, \dots, \frac\partial{\partial v_N},
\frac\partial{\partial y},
\frac\partial{\partial z}.
$$
Since the coefficients are rational functions in the variables $x,y,z, u_i,v_i,w_i$
(polynomial except in $z$), so is $\Delta$.
It follows that if $\Delta$ is nonzero at one point $p \in \tilde{\mathcal E}^{(N)}$, 
then it is nonzero in a dense subset of $\tilde{\mathcal E}^{(N)}$ 
(the zeroes of $\Delta$ constitute a proper algebraic 
submanifold in $\tilde{\mathcal E}^{(N)}$).

Recall that $u_0 = 1$, $v_0 = w_0 = 0$.
Let $p$ be the point
$x = y = u_1 = \dots = u_N = v_1 = \dots = v_N = w_1 = \dots = w_N = 0$, $z = 1$, then
$\bar u_i = 1/2^i$ for all $i \ge 0$.
It is easy to check that $\Delta(p)$ has enough zero entries to decompose into a product of
four simple determinants. 
As a matter of fact, 
a thorough inspection reveals that $\Delta(p)$ decomposes into the blocks 
$$
\Delta(p) = 
\left|
\begin{array}{ccccc}
M_{11} & 0 & 0 & M_{14} \\
0 & M_{22} & 0 & M_{24} \\
0 & M_{32} & M_{33} & M_{34} \\
0 & 0 & 0 & M_{44} 
\end{array}
\right|
$$
with a division after $N$th, $2N$th and $3N + 1$st row and column.
Therefore, $\Delta(p) = |M_{11}| \cdot |M_{22}| \cdot |M_{33}| \cdot |M_{44}|$.
The simplest $2 \times 2$ block $M_{44}$ corresponds to the fields $Z_1, Y_0$ and the basis
vectors $\partial/\partial y$, $\partial/\partial z$. 
It follows that $M_{44}$ is a unit matrix, hence $|M_{44}| = 1$.
The second simplest $N \times N$ block $M_{22}$ corresponds to the fields 
$Z^\circ_{1}, \dots, Z^\circ_{N}$ 
and the basis vectors $\partial/\partial w_1, \dots, \partial/\partial w_N$.
As such, $M_{22}$ is a diagonal matrix with the entries $2^{-1}, 2^{-2}, \dots ,2^{-N}$ and,
therefore, $|M_{22}| = 2^{-(N + 1)/2} \ne 0$.
Only slightly more involved is $M_{11}$, which corresponds to the fields 
$Y_1, X_1^+, \dots,  X_{N - 1}^+$
and the basis vectors $\partial/\partial u_1, \dots, \partial/\partial u_N$.
The first column $-\frac14, -\frac18, \dots, -\frac18$ corresponds to $Y_1$, whereas the rest 
corresponds to $X_n^+$ at $p$, which is
$$
\sum_{k = 0}^N \frac{(-1)^{k+1}}{2^{n+k}} B(n-1,k-1)
     \sum_{i = n+k}^{N} u_{i-n-k} \frac{\partial}{\partial {u_{i}}}.
$$
Here the only nonzero terms are those with $i - n - k = 0$ (since $u_i = 0$ except for $i = 0$). 
Hence, the element at the crossing 
of the $n + 1$st column and $i$th row is 
$$
\frac{(-1)^{k+1}}{2^{i}} B(n-1,i-n-1).
$$
By the description of $B$ following equation~\eqref{BB}, we see 
that $M_{11}$ is lower triangular with 
$-\frac14$, $2^{-2}, 2^{-3}, \dots ,2^{-N}$ on the diagonal and, again, $|M_{11}| \ne 0$.

Finally, the remaining block $M_{33}$ corresponds to the fields 
$Z^\circ_{N+1}, \dots, Z^\circ_{2N + 1}$ and the basis vectors 
$-\partial/\partial x, \partial/\partial v_1, \dots, \partial/\partial v_N$.
It follows that the $ij$th element of $M_{33}$ is 
$$
\frac{(-1)^i}{2^{i - 1}} \binom{N + j + 1}{i - 1}
$$
and, therefore, 
$$
|M_{33}| = \prod_{i = 1}^{N + 1} (-1)^i/2^{i - 1}
 \ne 0.
$$
Summing up, the  fields 
$Y_1$, $X_1^+, \dots,  X_{N - 1}^+$, $Z^\circ_{1}, \dots, Z^\circ_{2 N + 1}$, $Z_1$, $Y_0$ 
are linearly independent in a dense subset of the $3 N + 3$-manifold
$\tilde{\mathcal E}^{(N)}$ and, therefore, $F = `const$, which finishes the proof.
\end{proof}

\begin{corollary}
There is no possible functional dependence among the potentials $u_i,v_i,w_i$.
\end{corollary}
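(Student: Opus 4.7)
My plan is to deduce the corollary directly from the preceding proposition by a short argument by contradiction.

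Suppose, aiming at a contradiction, that the potentials admit a nontrivial functional relation. This means there exist $N$ and a smooth nonconstant function $F=F(u_1,\dots,u_N,v_1,\dots,v_N,w_1,\dots,w_N)$ of only the potentials such that $F(u_i,v_i,w_i)=c$ for some constant $c$ on the covering $\tilde{\mathcal E}^{(N)}$. Because $F$ is constant as a function on the covering, it is in particular annihilated by the two total derivatives, so $\tilde D_x^{(3)} F=\tilde D_y^{(3)} F=0$.

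At this point the preceding proposition applies verbatim and forces $F$ to be a constant when viewed as a function on the covering $\tilde{\mathcal E}^{(N)}$. Since $F$ was chosen to depend only on $u_i,v_i,w_i$ and not on $x$, $y$, or any jet coordinate $z_\mu$, being constant on the covering coincides with being constant as a function of its formal arguments; this contradicts the nonconstancy of $F$. Thus no functional relation can exist.

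There is essentially no real obstacle in carrying this out: all of the analytic work is already packed into the preceding proposition, whose proof constructs $3N+3$ linearly independent nested Lie brackets $Y_1,X_1^+,\dots,X_{N-1}^+,Z_1^\circ,\dots,Z_{2N+1}^\circ,Z_1,Y_0$ and verifies their independence through the block decomposition of the Jacobian-type determinant $\Delta(p)$. Once that statement is granted, the corollary is simply its natural reformulation at the level of the potentials alone, stripped of the jet variables and of $x,y$.
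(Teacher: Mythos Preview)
Your argument is correct and mirrors the paper's own proof exactly: a functional relation $F(u_1,\dots,w_N)=0$ forces $\tilde D_x^{(3)}F=\tilde D_y^{(3)}F=0$, and then the preceding proposition gives $F=\mathrm{const}$. The paper states this in a single line; your version simply unpacks the contradiction and adds commentary on where the work was done, but the content is identical.
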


\begin{proof}
A relation $F(u_1,\dots,u_N,v_1,\dots,v_N,w_1,\dots,w_N) = 0$
would imply that $\tilde D^{(3)}_x F = \tilde D^{(3)}_y F = 0$.
\end{proof}

Cf.~\cite[Corollary~1]{K-S}.

\section{Conclusions}

We found four infinite series of linearly independent nonlocal conservation laws of the 
constant astigmatism equation. The hierarchies are closed with respect
to duality $\mathcal I$ and reciprocal transformations $\mathcal X, \mathcal Y$.
The corresponding potentials $u_i,v_i,\bar u_i, \bar v_i$ 
exhibit functional dependence of Wronskian type. 

\section*{Acknowledgements}

MM acknowledges the support from the GA\v{C}R under grant P201/12/G028,
AH the support from the Silesian university under SGS/2/2013.
Thanks are also due to I.S. Krasil'shchik and M. Pavlov for enlightening discussions.

\section {References}

\end{document}